\begin{document}
\begin{frontmatter}                           % The preamble begins here.
%\vspace{20pt}
%\def\infnumber{1}%
%\def\infyear{2010}%
%\def\volume{36}
%\def\infphouse{\journal,\volume, \infyear, \infnumber}
%

\title{A Fault-Tolerant Emergency-Aware Access Control Scheme \\
 for Cyber-Physical Systems}
%\thanks{}
\author{\textbf{\fnms{Guowei} \snm{Wu}}},
\author{\textbf{\fnms{Dongze} \snm{Lu}}},
\author{\textbf{\fnms{Feng} \snm{Xia}}\thanks{Corresponding author; e-mail: f.xia@ieee.org.}},
\author{\textbf{\fnms{Lin} \snm{Yao}}}
\address{School of Software, Dalian University of Technology, \\
Dalian 116620, China}

\runningauthor{G.W. Wu, D.Z. Lu, F. Xia, L. Yao}
\runningtitle{A Fault-Tolerant Emergency-Aware Access Control Scheme for Cyber-Physical Systems}
% HISTORY:
%\received{\smonth{Sep} \syear{2010}}
%
\maketitle
\begin{abstract}
Access control is an issue of paramount importance in cyber-physical systems (CPS). In this paper, an access control scheme, namely FEAC, is presented for CPS. FEAC can not only provide the ability to control access to data in normal situations, but also adaptively assign emergency-role and permissions to specific subjects and inform subjects without explicit access requests to handle emergency situations in a proactive manner. In FEAC, emergency-group and emergency-dependency are introduced. Emergencies are processed in sequence within the group and in parallel among groups. A priority and dependency model called PD-AGM is used to select optimal response-action execution path aiming to eliminate all emergencies that occurred within the system. Fault-tolerant access control polices are used to address failure in emergency management. A case study of the hospital medical care application shows the effectiveness of FEAC.
\end{abstract}
\begin{keyword}
fault-tolerance, access control, cyber-physical systems, emergency management.
\end{keyword}
\end{frontmatter}

%*************** Text entry area ******************%
%Section 1
\section{Introduction}\label{intro}%1

Cyber-Physical Systems (CPS) is the integration of computing, communication and storage capabilities with monitoring and controlling the entities in the physical world \cite{1,2,3,4}. The emergence of such systems has effect on the revolution including high confidence medical devices and systems, assisted living, traffic control and safety, advanced automotive systems, process control, energy conservation, environmental control, avionics, instrumentation, critical infrastructure control, distributed robotics, defense systems, manufacturing, and smart structures \cite{5,6,7,8,9}. The security issues are crucial for CPS applications because the entities within the systems not only interact with each other, but also with the physical environment, thus the security issues must be addressed before CPS applications could be widely deployed. Access control is an essential component of CPS security to protect sensitive resources and services from unauthorized access and qualify the behavior of entities within the system.

Existing access control schemes as RBAC, GRBAC, CAAC \cite{10,11,12} are traditionally provide access services in a passive manner, which need the subject explicitly require the access. The access control polices of these schemes are statically defined before the application deployed, and cannot be adjusted according to the change of system environment dynamically. Especially in emergency situation, traditional access control schemes cannot provide proper privileges to execute the response actions to avoid the failure of the system. In CPS application, physical environment is an import part of the whole system, when making the access control decisions, the environment context and the whole system context (not only the context of access subject, but also the object context and system states) must be taken into account.

In this paper, a new access control scheme called FEAC (Fault-tolerant Emergency-aware Access Control) is proposed, which provides a proactive and adaptive access control polices especially to address multiple emergencies management problem and supposes the fault-tolerant scheme for CPS applications. PD-AGM (Priority and Dependency-Action Generation Model) is introduced to select the optimal response action path for eliminating all the active emergencies within the system, reference the methods in \cite{13} and \cite{14}. The priority and dependency relationships of emergencies are used to exclude the infeasible response action paths and relieve the emergencies combination state explosion problem. In order to handle all the emergencies timely, emergency-group and emergency-role are introduced for parallelly processing multiple emergencies.

The remainder of this paper is organized as follows. Section~\ref{rel} introduces related work. Section~\ref{pre} presents the primary concepts of emergency management. Section~\ref{FEAC} presents the FEAC scheme including PD-Action Generation Model and access control policy. Section~\ref{val} gives the validation proof of the system model. Section~\ref{cs} presents a case study to demonstrate the access control scheme. Section~\ref{con} concludes the paper.

% Section 2
\section{Related Work}\label{rel}%2
With the growth of pervasive computing technologies, researchers pay increasing attention to a new area named CPS. Security is one of the most important problems that must be addressed in CPS applications \cite{15,16}. Much work has been done with respect to access control for pervasive computing systems and other systems, which is a primary component of system security.

Role Based Access Control (RBAC) \cite{17} is one of the most influential schemes for authorized access information. Within an organization, roles are created for various job functions. The permissions to perform certain operations are assigned to specific roles without directly associated with subjects. RBAC provides an effective and easy way to enforce complex access control policies. Different from RBAC, Context Based Access Control (CBAC) \cite{18} avoids the notion of roles and directly associates permissions to the subjects by the context information. Usage Control (UCON) \cite{19,20} combines the notions of access control, trust management and digital rights management to provide fine-grained access control to unknown subjects. None of these schemes have the ability of privacy preservation when the system under emergency situations. The access control polices runs in a reactive manner and the explicit access require from subjects is needed. The access control policies of RBAC are static in nature and predefined before deployment. Though UCON and CBAC have the ability to change the permissions available to subjects, they only consider the change of the subject context, which is too simplistic for CPS to manage the emergency situations.

The Policy Spaces (PS) model  \cite{21} provides adaptive emergency management. It divides polices into groups and provides access privilege for specific situations. However, the PS works in reactive manner in nature and cannot control the emergencies within the system timely. Criticality-Oriented Access Control (COAC) \cite{22} firstly introduces the notion of altering access control privileges to enable emergency management for smart-spaces. The alternate idea is to promote the role of specific subjects in the space to execute response actions in a limited duration. COAC can only control the systems with single emergency. Criticality aware Access control (CAAC) \cite{23,24} expands the COAC scheme with a stochastic modeling framework for evaluating the management of multiple emergencies in smart-spaces. The stochastic modeling framework provides a mechanism for determining the response actions or dealing with the stochastic nature of emergencies. CAAC scheme presents a more proactive and adaptive manner than other schemes.

Although the existing schemes play important roles to guarantee the security requirement of CPS applications in various degrees, designing a proactive and adaptive access control is still a challenging issue in CPS. In this paper, a new access control especially to address the emergency management problem and fault-tolerant problem for CPS applications is proposed based on other relevant schemes. The major differences between this work and the aforementioned schemes are as follows:
\renewcommand{\labelenumi}{(\arabic{enumi})}
\begin{enumerate}
  \item PD-AGM model is introduced to select the optimal response action path for eliminating emergencies. The priority and dependency relationships of emergencies are used to exclude the infeasible response action paths and relieve the emergency combination state explosion problem.
  \item The Influence-factor is employed to precisely represent the influence between emergencies in the group, which can help to generate the optimal response action path. One emergency can influence other emergencies in terms of the emergency-duration, the probability of success and the execution time for specific response actions.
  \item To guarantee the emergencies be processed timely, the emergency-group and emergency-role are adopted for parallelly processing the emergencies within the system. Emergencies are grouped by the entity they belong to. The emergencies in different groups are parallelly processed, while the emergencies in the group are processed in sequence.
  \item Role-mapping and constraints are proposed for selecting proper subjects to execute specific response actions. A hierarchical role structure is used for selecting the most suitable subject.
  \item Fault-tolerant scheme is proposed for protecting the normal running of the system by proactive alternating permissions and services to substitute entities after the failure of emergency management.
\end{enumerate}

% Section 3
\section{Preliminaries}\label{pre}%3
In this section, we introduce some of the principal concepts of emergency management used in FEAC.

%section 3.1
\subsection{Emergency}\label{Emy}%3.1
Emergency is defined as the effect of series of events in physical world \cite{25}, which can cause the system goes into unstable states. These events are called emergency events. The time duration for executing some operations to restoring the system back to normal state is called emergency-duration (Ed), and the corresponding operations are called response actions for specific emergency within the system \cite{26}. In medical emergency, the time duration is called golden-hour, during which the prompt medical treatment can save life. If the emergency-duration has expired, the system will fail and cause the loss of property and even life.

%section 3.2
\subsection{Emergency Management}\label{Emyman}%3.2
Emergency management is designed to control all the active emergencies that occur in the system, to protect the sensitive information and to limit the actions of the entities under emergency situations. Emergency management enables response actions to eliminate the active emergencies to avoid system failures in a proactive manner. Emergency management for single emergency includes three phases which are shown in Figure~\ref{fig1}: Detection and Preparation, Response, and Post-process. In the Detection and Preparation phase, the emergency is detected in a timely manner in order to save time for executing the response actions. The response action path is selected after the detection of the emergency. Then in the Response phase, the responses actions are performed to eliminate the active emergencies. Emergency-role and permissions for executing response actions are assigned to the selected subjects. Finally in the Post-process phase, the response actions are evaluated, and the properties of the emergency may be updated if necessary. The response action path and the selection of the response actions might be changed for the next execution.

In CPS, multiple emergencies may occur at one time. Unlike single emergency management, multiple emergencies management is more complex:
 \begin{enumerate}
   \item The system not only needs to track the execution of the emergencies already existing, but also needs to detect the occurrence of new emergencies.
   \item Multiple emergencies that occurred on the same entity must be processed in sequence, and only one emergency could be processed at one time. The schedule for the process of multiple emergencies should be considered.
   \item The execution sequence is influenced by the dependency relationships between different emergencies. Emergency-dependency exists between emergencies that occurred on the same entity, and also between emergencies that occurred on different entities.
   \item Parallel process of multiple emergencies is crucial for the performance of the emergency management.
 \end{enumerate}
%figure 1
\begin{figure}%[t]
   \centerline{\includegraphics[width=2.7in]{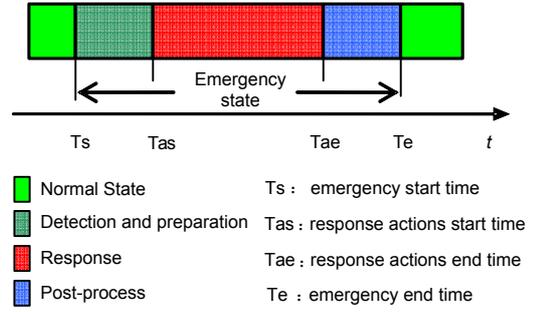}}
   \caption{Single Emergency Management}\label{fig1}
\end{figure}
%section 3.3
\subsection{Emergency-group and Emergency-dependency}\label{Emy}%3.3

 According to the entity the emergency belongs to, multiple emergencies are divided into different groups, named emergency-group. The emergencies in different groups can be parallelly processed. In FEAC, a specific environment emergency-group is defined to specify the emergencies caused by the environment. The environment emergency affects other emergencies in various ways. The affected emergencies can be processed only when the environment emergency is eliminated. Fire in the intensive care unit (ICU) is an example of environment emergency. In this context, other rescue actions for patients can be deployed only when the fire under control.

 Emergency-dependency is first introduced in this paper. Emergency-dependency can reduce the states of action generation model. In FEAC, the following five kinds of dependency relationship are considered.

%definition 1-5
\begin{definition}\label{D1}
\textbf{Entity-dependency}. Emergencies are divided into different emergency-groups according to the entity each emergency belongs to.
\end{definition}

\begin{definition}\label{D2}
\textbf{Time-dependency}. This dependency relationship indicates the processing sequence between the emergencies within an emergency-group.
\end{definition}

\begin{definition}\label{D3}
\textbf{Environment-dependency}. This is a special dependency property of CPS applications. In CPS, the computing system interacts with physical environment. Consequently, environment emergency affects other entity emergencies. The dependency relationship between entity emergencies and the environment emergency is called environment-dependency.
\end{definition}

\begin{definition}\label{D4}
\textbf{Resource-dependency}. The resource-dependency relationship indicates the competition of certain resource. The emergencies in different emergency-groups may have resource-dependency relationship.
\end{definition}

\begin{definition}\label{D5}
\textbf{Subject-dependency}. Some of the response actions with respect to different emergencies need the same subject to perform. One subject cannot perform multiple response actions at one time.
\end{definition}

 To avoid dependency deadlock circle within a group, we assign priorities to different emergencies, and assume the time-dependency may exist only from the emergency with higher priority to that with lower priority. Thus the dependency deadlock circle can be broken. The time-dependency affects the execution sequence in the group. The entity-dependency determines the emergency-group of the emergencies within the system. The last three dependencies exist between groups, and affect parallel process of emergencies. Subject-dependency conflict can be solved by reselecting subjects for the emergencies.

%section 4
\section{Fault-tolerant Emergency-aware Access Control}\label{FEAC}%4
 Figure~\ref{fig2} depicts the schematic diagram of the FEAC scheme for CPS applications. FEAC runs under the normal state, emergency state and fault-tolerant state. The transition happens with the direction of arrow. When the emergency event occurs, the system moves from normal state to emergency state. When no emergency is active, the state moves back to normal state. The failure in finding the optimal response path in emergency management brings the system to fault-tolerant state. If the failure is addressed successfully, the system turns back to emergency state and continues emergency processing; otherwise, the system goes to disaster state and disables all services.

 %figure 2
 \begin{figure}%[t]
   \centerline{\includegraphics[width=2.7in]{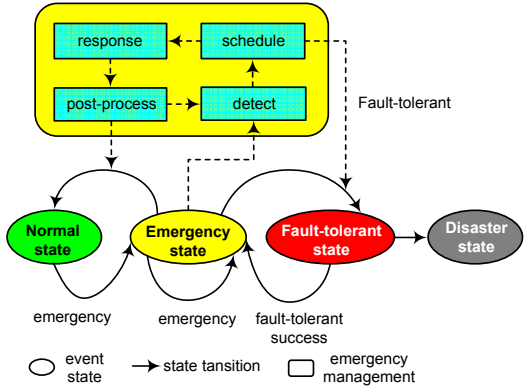}}
   \caption{Illustration of FEAC}\label{fig2}
\end{figure}

 The constitution of FEAC structure is described in Figure~\ref{fig3}. The lower layer is the data layer, which provides the data information for decision entity to generate the access control decisions. The original data include subjects (S), objects (O), permissions (P), roles (R) and constraints (C) in core RBAC. Additional data include context and constraint. Upper parts in data layer are abstract data and dynamic data. ACMD (Access Control Meta-Data) abstracts the relations between subjects, roles and corresponding permissions, and provides meta-data for other units. CMU (Constraints Management Unit) provides access control constraints for both normal state and emergency state. DCMU (Dynamic Context Management Unit) processes the context data that are collected in the lower parts and provides higher level contextual information for the components of decision layer. In decision layer, EMU (Emergency Management Unit) and FTU (Fault-Tolerant Unit) utilize the data information provided by data layer, and interact with RMU to handle the emergencies that occur within the system and generate appropriate access control permissions. AMU (Account Management Unit) records all the action events of the system. ACPM (Access Control Policy Management) performs the access control.

%figure 3
\begin{figure}%[t]
   \centerline{\includegraphics[width=2.7in]{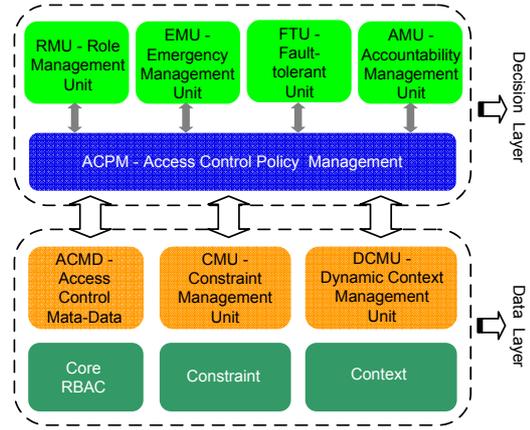}}
   \caption{FEAC System Structure}\label{fig3}
\end{figure}

 Initially, the administrator of the system establishes the set of roles, permissions and constraints when the FEAC is deployed. Under normal state, the permissions in the objects' ACLs are used to make access decisions. If necessary, entries are dynamically added into the ACL according to the constraints and contexts of subjects and objects. When the system is in emergency state, FEAC uses PD-AGM to evaluate the characteristics of the emergencies, select the optimal response action path and response actions and proactively activate the permissions. The selected subjects are proactively informed to access the system with the permissions assigned to the emergency-role.

%section 4.1
\subsection{Emergency Management in FEAC}\label{Emy-Mana}%4.1
 The emergency management structure is shown in Figure~\ref{fig4}.  ACMI (Access Control Meta-data Interpreter) and CCI (Constraint and Context Interpreter) are the interfaces of meta-data, constraints and contexts information in data layer. Emergency event is first detected by EDU (Emergency Detection Unit), and then EPDU (Emergency Property Determination Unit) determines the properties, such as Ed and TS (Task set) of the emergency. Most of the operations are performed by EPU (Emergency Process Unit), which uses the data information and emergency properties to schedule emergencies and execute the response actions. The principal component of EPU is PD-AGM, which is responsible for generating the optimal response action path and the response action for the emergency. EPU interacts with other system components through EMIU (Emergency Management Interaction Unit). ENU (Emergency Notification Unit) informs the selected subjects of the permissions they have been assigned. Each permission is associated with a time duration to restrict the time period that the subject can process the emergency.

%figure 4
\begin{figure}%[t]
   \centerline{\includegraphics[width=2.7in]{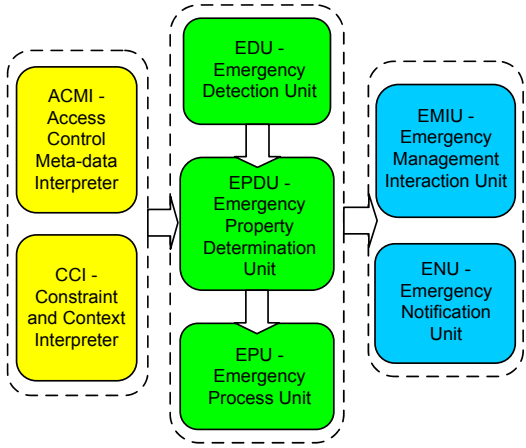}}
   \caption{PD-Action Generation Model}\label{fig4}
\end{figure}

%section 4.1.1
\subsubsection{ PD-Action Generation Model}\label{PDAGM}%4.1.1

 Action Generation Model (AGM) is first introduced in \cite{27}. It is an effective way to determine the response actions for emergencies. In AGM, Response Links (RL) represents the transitions that make the system restore to the normal state. Stochastic crisis planning technique developed in \cite{28} is used in AGM to model all the possible emergency states. Due to the diversity of the emergency combinations, the state explosion problem remains an open issue.

 PD-AGM is an extension of AGM and FDs\cite{29}, which uses priority and emergency dependency for response action path generation. PD-AGM can reduce the number of emergency states of the system and relieve state explosion problem. Figure 5 illustrates the generation of the emergency state transition graph. The solid line and the longer dotted line distinguish the priority hierarchy and dependency hierarchy. The shorter dotted line makes a distinction between different priorities in the dependency hierarchy. If the emergencies have the same priority, the model uses stochastic method to generate all the possible combinations. Each edge (RL) is associated with time, TS, Ed and execution time of specific emergency. The algorithm for state transition is described in Algorithm~\ref{alg1}.

%***************** Algorithm 1 ***********************
    \begin{algorithm}
    \caption{\small{Emergency state transition graph generation}}
    \label{alg1}
    \begin{algorithmic}[1]
    \REQUIRE~~\\
        Group of emergencies $g$;
    \ENSURE~~\\
        Root node of emergency states transition graph $t$
    \STATE Sorted the emergencies in group g by priority
    \FORALL{emergency priority hierarchy}
        \STATE Put dependency emergencies into the dependency hierarchy
        \STATE Remove them from corresponding priority hierarchy
        \WHILE{dependency hierarchy is not Null}
        \STATE Put dependency emergencies into higher hierarchy
        \STATE Remove them from corresponding priority hierarchy
        \ENDWHILE
    \ENDFOR
    \FORALL{priority-dependency hierarchy}
        \STATE Sort the emergencies by priority
    \ENDFOR
    \STATE Set the root node $t$ of the emergency states
    \FORALL{sorted priority hierarchy}
        \STATE Randomly generate state transition path, and add it to the graph $t$
        \STATE Calculate and set the edge properties as Ed and execute time
        \STATE Set the tail node of this hierarchy.
    \ENDFOR
    \STATE Set the tail node as normal state and return the root node $t$
    \end{algorithmic}
    \end{algorithm}

%section 4.1.2
\subsubsection{RL path and Response Actions}\label{rlra}%4.1.2

The choice of the optimal RL path and response actions for the group of emergencies is crucial for the emergency management. $TS$ presents the set of response actions which are used to mitigate the emergency, denoted as $<\{a_1, a_2, ... , a_k\}, t, p>$, where $k$ is the number of sub-actions, $t$ is the execution time of the set of response actions and $p$ is the probability of successful execution of the response actions. In this paper, we select TS that has the highest probability of success and associate it with the corresponding RL from the set of TSs. The RL is denoted as $<Eid, TS, Ed>$, where $Eid$ is the identity of the emergency to be processed and $Ed$ is the Emergency duration for executing the set of response actions in TS. Only when all groups of emergencies have been eliminated, the system is considered to recover back to normal state. The TS is affected by the probability $p$ and the available resources. We choose the TS with the highest probability in the executable set of TSs.

%figure 5
\begin{figure}%[t]
   \centerline{\includegraphics[width=2.7in]{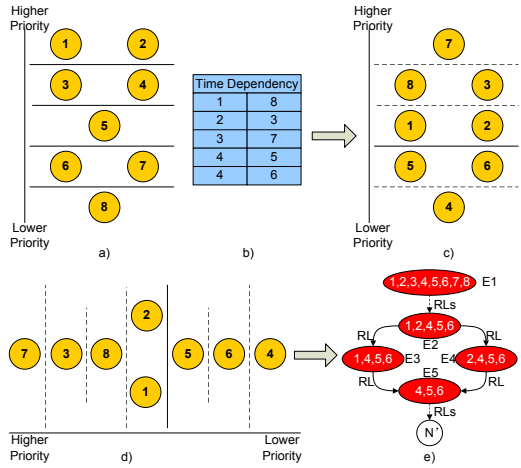}}
   \caption{State Transition of PD-AGM}\label{fig5}
\end{figure}

The choice of RL depends on its $P$-$value$. The definition of the $P$-$value$ is as follows:

%definition 6
\begin{definition}\label{D6}
\textbf{P-value}. $P$-$value$ is defined as the total probability of successful recovery to the normal state from current emergency state.
\end{definition}

The $P$-$value$ $Pv(i)$ of current node $i$  is calculated by (\ref{equ1}) in a recursive manner:

%equation 1
\begin{equation}
y=\left\{\begin{array}{l@{~~~~}l}\label{equ1}
0 & any~Ed~expired \\
1 & normal~state \\
\max_{j=1...k}(p(i,j)\cdot Pv(j)) & childhood~node \\
\end{array}\right.
\end{equation}

Here $p(i,j)$ is the probability of reaching emergency state $j$ from state $i$. If the result of the $P$-$value$ on root node is not equal to zero, the path with the largest $P$-$value$ will be chosen as the optimal response path. If more than one path has the same $P$-$value$, the path with the shortest execution time will be chosen. If the $P$-$value$ is equal to zero, it indicates that all the RLs cannot meet the time requirement of emergency duration. Fault-tolerance is used to overcome this problem by choosing substitution entity. After fault-tolerance is performed, two heuristic selection algorithms are used to minimize the lose of property and life: one algorithm chooses the RL with the maximum probability of successful transition to normal state and the other chooses the RL with the minimum execution time as the optimal response action path.

The emergencies are not independent in the group, and they interact with each other. The Influence Factor $\sigma$ is used to represent the influence between the emergencies. The corresponding definition is as follows:

%definition 7
\begin{definition}\label{D7}
\textbf{Influence Factor $\sigma$}. Influence factor is used to abstract the influence on the emergency by other active emergencies in the group. Three properties are affected by $\sigma$: priority, execution time and Ed, which can be calculated with the following equations:

%equation 2-4
\begin{equation}
p'_{ji}=(1-\sigma)p_{ji}\label{equ2}
\end{equation}
\begin{equation}
t'_{ji}=(1+\alpha \cdot \sigma)t_{ji}\label{equ3}
\end{equation}
\begin{equation}
Ed'_{ji}=(1-\beta \cdot \sigma)Ed_{ji}\label{equ4}
\end{equation}

\end{definition}

where $\alpha$ and $\beta$ are the coefficients for execution time and emergency duration respectively.

%section 4.2
\subsection{Fault-tolerance in FEAC}\label{ft}%4.2

The fault-tolerance module always runs in a passive manner when the system crashed. In our scheme, it executes in a proactive way by predicting failure of the system. As mentioned above, when the $P$-$value$ is 0, it is implied that the emergency management fails to eliminate all the active emergencies in the emergency-group. After the mitigation actions have been processed, the system goes to disaster state. FEAC employs another hierarchy to protect the system using a fault-tolerant method.

%figure 6
\begin{figure}%[t]
   \centerline{\includegraphics[width=2.7in]{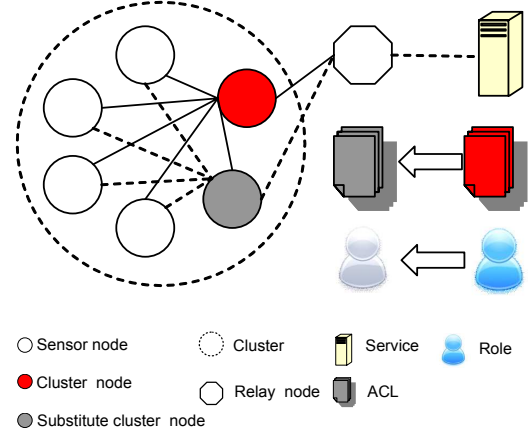}}
   \caption{Example of fault-tolerance in FEAC}\label{fig6}
\end{figure}

Figure~\ref{fig6} gives an example of exchanging cluster head node of wireless sensor network in order to tolerate the damage of the cluster head node. The scheme selects the proper node instead of the previous one, adds the entities in previous ACL to the substitute node's ACL and assigns the roles of previous one to the substitute one. The concrete algorithm for achieving fault tolerance is illustrated in Algorithm~\ref{alg2}. The algorithm checks whether the entity could tolerate faults. When the result is false, and if there does not exist any substitute entity, the system moves to disaster state and fails; otherwise, a substitute entity will be activated to continue performing the jobs instead of the original entity. The substitute entity is selected from the EFGT (Entity Function Group Table), which has the same function with the entity encountering the fault.

%***************** Algorithm 2 ***********************

    \begin{algorithm}
    \caption{Fault-tolerance algorithm}
    \label{alg2}
    \begin{algorithmic}[1]
    \REQUIRE~~\\
        entity $e$ that needs to tolerate the fault;
    \IF{the property that indicates feasibility of fault-tolerance is false}
        \STATE exits and moves to $\textbf{Disaster}$ state
    \ELSIF{$se = findSubEntity(e)$ is Null}
        \RETURN exits and moves to $\textbf{Disaster}$ state
    \ELSE
        \IF{$getACL(e)$ is not Null}
            \FORALL{entries in the ACL}
                \STATE Add it to the ACL of $se$
                \STATE Inform the subject that has activated the role of the entity for the exchange of permissions
            \ENDFOR
        \ENDIF
        \IF{$getRole(e)$ is not Null}
            \STATE Add the roles to the ASRT (Active Subject Role Table) of $se$
            \STATE Inform $se$ of the permissions associated
        \ENDIF
    \ENDIF
    \end{algorithmic}
    \end{algorithm}

%section 4.3
\subsection{FEAC Policy Specification}\label{ps}%4.3
%section 4.3.1
\subsubsection{Emergency-role and Subject Selection}\label{erss}%4.3.1
In FEAC, emergency-role is temporarily designated to execute response actions which are not allowed to operate by normal-role. In emergency state, the subject can be associated with only one emergency-role because one subject can process only one emergency at a time; otherwise, the time waiting for the subject to finish performing the response actions will delay the process of other emergencies, potentially causing the emergencies uncontrollable. In addition, associating one subject with just one emergency-role is convenient for parallel process of the emergencies. The minimum permissions are assigned to the emergency-role in order to prevent damage from the execution of malicious entities.

The method of emergency-role and role-constraint mapping is used to hierarchically select the most suitable subjects to execute the response actions. In emergency-role mapping, each emergency-role is mapped to normal-roles (one to many), and thus a hierarchy structure is formed. The subjects of normal-roles in the higher hierarchy are more suitable than the ones in lower hierarchy as the subjects to execute response actions. The constraints of the normal-role in the emergency-role mapping are used to guarantee the correctness of the subject selection, such as the distance between the subject location, the place where the emergency occurred, the number of the subjects that can be associated with the emergency-role and the properties of subject (experience, licenses and certificates).

%section 4.3.2
\subsubsection{Enabling Response Actions}\label{enableAct}%4.3.2
When the response actions and subjects have been selected, the next stage is to enable the permissions for the subjects to execute the response actions, to notify the subjects and to rescind the permissions when necessary. It includes four steps:
\begin{enumerate}
  \item According to the RL and the selected response actions, assign the corresponding permissions to the emergency-role. The Ed of the emergency is associated with the permissions to limit the execution time for the subjects to execute the response actions.
  \item Proactively alternate the emergency-role to the selected subjects, and record the normal-roles for recovery.
  \item Inform the chosen subjects to use the permissions to execute the response actions for eliminating the emergency.
  \item Rescind the assigned permissions after the emergency has been solved or the Ed has expired.
\end{enumerate}

%section 4.3.3
\subsubsection{Policy Implementation}\label{pi}%4.3.3
Given the system structure of FEAC, this subsection will describe the principal components of the access control model and the policy implementation. Table~\ref{table1} shows the set of the components in the access control model, such as the notion of role, subject, object, permission, ACL, emergency and constraint. Table~\ref{table2} lists the table components of the access control model. SRT and ASRT store the roles that can be allocated and activated for subjects. TDT and EDT are the tables of time-dpendency and environment-dpendency. RMT and RCT are used to select the most appropriate subjects to execute the response actions. The original roles of the subject are stored in the ORT, and will be reassigned when necessary. The entities have the same function are grouped in EFGT.

%table 1
\begin{table}
\caption{Set Components of Access Control Model}\label{table1}
\begin{tabular}{m{55pt}|m{130pt}}
\toprule
\multicolumn{2}{c}{Sets}\\
\midrule
Subject (S) & $S=set~of\{<Sid,Pt>\}$, where $Pt$ is the properties of the subjects, $Sid=unique~<string>$\\\hline
Object (O) & $O=set~of\{<Oid,ACL>\}$, where $Oid=unique~<string>$ and ACL is the access control list\\\hline
Role (R) & $S=NR \bigcup ER$, where $NR$ and $ER$ are both $set~of~\{<role>\}$, present the normal-role and emergency-role respectively\\\hline
Permission (P)& $P=set~of\{<Oid,op,td>\}$, where $td$ is the time duration limited the use of permission\\\hline
Access Control List (ACL) & $ACL=set~of\{<r,p>\}$, where $r\in R,p\in P$\\\hline
Emergency (E)& $E=set~of\{<Eid,TSs,Ed,Pro,e>\}$, where $Eid=uniqur~<string>,~TSs=set~of\{<TS>\},~Pro$ is the priority, and $e$ is the entity on which the emergency happens\\\hline
Constraint (C)& $C=set~of{<constraint>}$\\
\bottomrule
\end{tabular}
\end{table}
%table 2
\begin{table}
\caption{Table Components of Access Control Model}\label{table2}
\begin{tabular}{m{60pt}|m{130pt}}
\toprule
\multicolumn{2}{c}{Tables}\\
\midrule
Subject Role Table (SRT)& $SRT=set~of\{<s,rs>\}$, where $s\in S,~\forall r\in rs,~r\in R$\\\hline
Active Subject Role Table (ASRT)& $ASRT=set~of\{<s,rs>\}$, where $ASRT\subset SRT$\\\hline
Time Dependency Table (TDT)& $TDT=set~of\{<Eid_{1},Eid_{2}>\}$, where emergency with $Eid_{1}$ executes before the one with $Eid_{2}$\\\hline
Environment Dependency Table (EDT)& $EDT=set~of~\{<e,Eid>\}$, entity $e$ dependents on the environment emergency with $Eid$\\\hline
Role Mapping Table (RMT)& $RMT=set~of\{<er,X,cr>\}$, where $\forall g\in X,~g\in NR,~cr\in C$\\\hline
Role Constraint Table (RCT)& $RCT=set~of\{<er,c>\}$, where $er\in ER,~c\in C$\\\hline
Old Role Table (ORT)& $ORT=set~of\{<s,rs>\}$, where $s\in S,~rs\subset ASRT$\\\hline
Entity Function Group Table (EFGT)& $EFGT=set~of~\{<e,FGid>\}$, where $FGid=unique~<string>$\\
\bottomrule
\end{tabular}
\end{table}

%***************** Algorithm 3 ***********************

    \begin{algorithm}
    \caption{FEAC execution model}
    \label{alg3}
    \begin{algorithmic}[1]
    \STATE $selSubject\leftarrow NULL$
    \STATE $Mode\leftarrow Normal$
    \STATE $curState\leftarrow N$
    \WHILE{$true$}
        \STATE $t\leftarrow checkSysState()$
        \IF{$t\neq curState$}
            \IF{$t = N$}
                \STATE $Mode\leftarrow Normal$
            \ELSE
                \STATE $Mode\leftarrow Emergency$
            \ENDIF
            \STATE $RescPerm\leftarrow (t\otimes curState)$
            \STATE $curState\leftarrow t$
        \ENDIF
        \IF{$Mode = Emergency$}
            \FORALL{$emergencyGroup$}
                \IF{$hasNewEmry(egid)\neq NULL$}
                    \STATE $PDAGM(egid)$
                    \STATE $pv\leftarrow getPvalue(egid)$
                    \IF{$pv = NULL$}
                        \STATE $faultTolerance(egid)$
                        \IF{$prob\neq 1$}
                            \STATE $probFirstSel(egid)$
                        \ELSE
                            \STATE $timeFirstSel(egid)$
                        \ENDIF
                    \ELSE
                        \STATE $optimalSel(egid)$
                    \ENDIF
                \ENDIF
            \ENDFOR
            \STATE $curEmy\leftarrow findEmy(t)$
            \IF{$curEmy\neq NULL$}
                \FORALL{$c\in curEmy$}
                    \STATE $er\leftarrow emyRole(c)$
                    \STATE $selSubject\leftarrow roleMapSel(er)$
                    \IF{$selSubject = NULL$}
                        \STATE $selSubject\leftarrow roleConsSel(er)$
                    \ENDIF
                    \IF{$selSubject\neq NULL$}
                        \STATE $TS\leftarrow getTS(c)$
                        \STATE $addACL(er,TS)$
                        \STATE $alterRole(selSubject,er)$
                        \STATE $informSub(selSubject,TS,er)$
                    \ENDIF
                \ENDFOR
                \STATE $recordAct()$
            \ENDIF
        \ENDIF
        \STATE $wait(tp)$
    \ENDWHILE
    \end{algorithmic}
    \end{algorithm}

Algorithm~\ref{alg3} illustrates the execution model of FEAC, which works in a loop manner to monitor the transition of system state. The function $checkSysState()$ checks the system state and returns the current system state. If a change is detected, then the system rescinds the pervious permissions and sets the appropriate system state. When the system is in emergency state, the system decides whether to use fault-tolerance according to the $P$-$value$. Functions $probFristSel()$, $timeFirstSel()$ and $optimalSel()$ are chosen in different situation. Function $findEmy()$ finds out the set of emergencies that need to be processed. For each of the emergencies in the set, the subject to execute response actions is selected by function $roleMapSel()$ and $roleConsSel()$ using RMT and RCT respectively. Once the subjects have been selected, the system assigns the corresponding permissions to the emergency-role and informs the subject to use the permissions. All the actions are recorded to ensure any malicious activity can be detected. After these actions, the system waits for $tp$ duration of time, and repeats the whole process.

%section 5
\section{Validation of FEAC}\label{val}%5

FEAC should work in a proactive and adaptive manner in order to provide the $right$ set of permissions to the $right$ set of subjects at $right$ time. In this section, we prove that the FEAC model can meet the following properties/requirements: responsiveness, correctness, security, liveness and non-repudiation.

%theorem 1-5
\begin{theorem}\label{T1}
\textbf{Responsiveness}. FEAC ensures that once an emergency occurs, the system will detect it in time, assigns proper permissions, and informs the selected subjects.
\end{theorem}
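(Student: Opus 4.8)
The plan is to decompose the \textbf{Responsiveness} claim into its three conjuncts --- timely detection, assignment of the proper permissions, and notification of the selected subjects --- and to discharge each by tracing the control flow of the execution model in Algorithm~\ref{alg3} together with the detection/property/process pipeline (EDU $\to$ EPDU $\to$ EPU $\to$ ENU) of Section~\ref{Emy-Mana}. I would first make explicit the standing assumptions the model already presupposes: the administrator has installed a non-empty base of roles, permissions and constraints; the polling period $tp$ is chosen strictly smaller than the smallest emergency-duration $Ed$ (and, more carefully, smaller than the influence-factor-adjusted $Ed'$ of (\ref{equ4})); and for every emergency-role the role-mapping hierarchy or the role-constraint table supplies at least one eligible subject, or else the fault-tolerance procedure of Algorithm~\ref{alg2} succeeds --- if none of these holds the system correctly diverts to the \textbf{Disaster} state and the premise of the theorem is vacuous.

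For detection I would note that the main loop of Algorithm~\ref{alg3} calls \texttt{checkSysState()} once per iteration and then blocks for $tp$ via \texttt{wait(tp)}; hence the latency from the physical onset of an emergency event to the moment EDU reports the state change is at most $tp$, and EPDU then stamps the emergency with its $Ed$. With $tp < \min Ed'$ the time left for response after detection is strictly positive, which is the content of ``detect it in time''. For the permissions, on the $Mode = Emergency$ branch the algorithm invokes \texttt{PDAGM(egid)} --- which by Algorithm~\ref{alg1} constructs the priority--dependency state-transition graph --- and evaluates $pv = Pv(\mathrm{root})$ by the recursion (\ref{equ1}); I would case-split on $pv \neq 0$ (then \texttt{optimalSel} returns the RL path of maximal $P$-$value$, ties broken by execution time) versus $pv = 0$ (then \texttt{faultTolerance} followed by \texttt{probFirstSel} or \texttt{timeFirstSel} returns a heuristic RL path). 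Either way a concrete RL, hence a concrete $TS$, is produced, and the lines \texttt{getTS(c)}, \texttt{addACL(er,TS)} load exactly the permissions required by that $TS$ into the ACL of the emergency-role $er$, each tagged with the $Ed$-derived duration $td$ (Table~\ref{table1}); this is precisely ``proper'' in the least-privilege sense of Section~\ref{erss}.

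For the notification conjunct I would verify that \texttt{selSubject} is non-\texttt{NULL} when \texttt{informSub(selSubject,TS,er)} is executed: \texttt{roleMapSel(er)} descends the RMT hierarchy, \texttt{roleConsSel(er)} is the RCT fallback, and under the standing assumption (or via the substitute entity produced by \texttt{faultTolerance}) one of them returns a subject; then \texttt{alterRole} grants $er$ to that subject, \texttt{informSub} pushes the notification through ENU, and \texttt{recordAct()} logs the event. Composing the three parts --- detection within $tp$, a populated emergency-role ACL, and a subject that is both selected and informed --- yields the theorem.

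The step I expect to be the main obstacle is the phrase ``in time''. It does not follow from the code alone, because the per-iteration latency $tp$, the PDAGM construction and $P$-$value$ evaluation, and a possible fault-tolerance detour all consume part of the budget $Ed$. The cleanest remedy is to promote the timing requirement to an explicit hypothesis, e.g. $tp + T_{\mathrm{sel}} < Ed'$ where $T_{\mathrm{sel}}$ bounds the worst-case selection/fault-tolerance overhead and $Ed'$ is the influence-adjusted duration of (\ref{equ4}); the argument then reduces to summing the per-phase delays and comparing against $Ed'$. Without such a bound the theorem is only true ``eventually'', not ``in time'', so pinning down $T_{\mathrm{sel}}$ (or assuming a bound on it) is where the real work lies.
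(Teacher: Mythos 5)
Your proposal follows the same basic route as the paper's proof --- a walkthrough of Algorithm~\ref{alg3}: detection via the periodic call to $checkSysState()$ on line 5 followed by $wait(tp)$, then lines 35 through 44 for subject selection, permission association via $addACL()$, role alternation via $alterRole()$, and notification via $informSub()$. The difference is one of rigor rather than strategy: the paper's proof is a four-sentence code trace that simply asserts each conjunct, whereas you make explicit the standing assumptions (a non-empty subject pool or a successful fault-tolerance fallback, and a polling period bounded by the adjusted emergency duration), carry out the case split on the $P$-$value$ to show that a concrete $TS$ is always produced, and --- most valuably --- observe that the phrase ``in time'' does not actually follow from the code alone. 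Your proposed remedy, promoting the timing requirement to an explicit hypothesis of the form $tp + T_{\mathrm{sel}} < Ed'$, addresses a real gap that the paper's proof passes over silently: the paper establishes only that detection and assignment \emph{happen}, not that they happen within the emergency-duration budget. So your argument subsumes the paper's and is strictly more careful; the only caution is that the paper never states the timing hypothesis you need, so to be faithful to the model as written you must either add it as an assumption (as you do) or accept that the theorem's ``in time'' clause is established only informally.
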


\begin{proof}\label{P1}
The FEAC model periodically detects the emergencies as shown in Algorithm~\ref{alg3} on line 5. After waiting for $tp$ duration of time, the system calls the function $checkSysState()$ to get current state and compares it with the previous one in order to detect new emergency and change in system state. Lines 35 through 44 are used for subject selection, permission association and subject notification. The role of the subject is alternated by function $altherRole()$ and the permissions are added into the ACL by function $addACL()$.
\end{proof}

\begin{theorem}\label{T2}
\textbf{Correctness}. Emergency-role can be designated and the response actions can be executed if and only if the system is under emergency state. Different path selection algorithms are processed in corresponding situations.
\end{theorem}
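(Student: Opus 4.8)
The plan is to prove Theorem~\ref{T2} by a direct case analysis of the control flow of Algorithm~\ref{alg3}, following the same style as the proof of Theorem~\ref{T1}. The statement has two parts: (i) an \emph{iff} claim that emergency-roles are designated and response actions executed exactly when the system is in emergency state, and (ii) a claim that the correct path-selection algorithm ($probFirstSel$, $timeFirstSel$, or $optimalSel$) is invoked in each situation. I would treat these as two separate lemmas-within-the-proof.

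For part (i), the ``if'' direction: I would trace the execution model and observe that on line~5 the current state is read via $checkSysState()$, and on lines~12--13 the mode is set to $Emergency$ precisely when the detected state $t\neq N$ (lines 7--11). The entire block that performs subject selection and response-action enabling --- the $roleMapSel$/$roleConsSel$ calls, $addACL(er,TS)$, $alterRole(selSubject,er)$, $informSub$ on lines~35--44 --- is guarded by the outer conditional $Mode = Emergency$ (line~15) together with $curEmy\neq NULL$ (line~33). For the ``only if'' direction I would argue the contrapositive: when the system is in normal state, line~8 forces $Mode\leftarrow Normal$, so the guard on line~15 fails and none of lines~16--46 execute; hence no emergency-role is designated and no response action is enabled. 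I would also invoke the discussion in Section~\ref{erss} that emergency-roles carry only the minimum permissions and are rescinded once $Ed$ expires (the $RescPerm\leftarrow(t\otimes curState)$ step on line~12), to confirm that leaving the emergency state actually revokes those designations, closing the ``only if'' direction fully.

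For part (ii), I would point to lines~20--29: once $PDAGM(egid)$ has built the transition graph and $getPvalue(egid)$ returns $pv$, the algorithm branches on whether $pv = NULL$ (equivalently, whether the root $P$-value is zero, per Definition~\ref{D6} and equation~(\ref{equ1})). If $pv\neq NULL$ the unique optimal path exists and $optimalSel(egid)$ is called; otherwise $faultTolerance(egid)$ runs and then the heuristic split on line~22 selects $probFirstSel$ when $prob\neq 1$ and $timeFirstSel$ otherwise, matching the two heuristic selection algorithms described after equation~(\ref{equ1}) (maximum success probability versus minimum execution time). I would note that these branches are mutually exclusive and exhaustive, so exactly one selection algorithm fires per emergency-group per iteration.

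The main obstacle I anticipate is not the control-flow bookkeeping but pinning down precisely what ``correctness'' of the selected path \emph{means} semantically --- the theorem as stated conflates the syntactic claim (the right function is called) with an implicit soundness claim (the path it returns genuinely restores the system to normal state whenever possible). To handle this I would lean on the recursive characterization of $Pv(i)$ in equation~(\ref{equ1}): since $Pv$ is $0$ on any state with an expired $Ed$, $1$ on the normal state, and the max over children otherwise, a nonzero root $P$-value certifies existence of a feasible response path, and $optimalSel$ by definition picks the $P$-value-maximizing one (breaking ties by shortest execution time). The remaining gap --- that the $PDAGM$ graph enumerates \emph{all} feasible paths, so the max is taken over the right set --- I would discharge by appealing to Algorithm~\ref{alg1} and the stochastic crisis-planning completeness claim imported from \cite{27,28}, treating it as given rather than re-proving state-space coverage.
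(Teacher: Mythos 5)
Your proposal takes essentially the same approach as the paper: both arguments are a control-flow case analysis of Algorithm~\ref{alg3}, tracing the mode-setting on lines 7--13, the guarded block after line 33 for the \emph{iff} claim, and the branch on the $P$-$value$ (lines 20--29) for the path-selection claim. Your version is somewhat more careful than the paper's --- notably the explicit contrapositive for the ``only if'' direction and the remarks on what semantic soundness would additionally require --- but the underlying argument is the same.
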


\begin{proof}\label{P2}
If the system is under emergency state, the system mode changes to emergency mode, and the value returned from function $findEmy()$ is not empty. Then the codes after line 33 in Algorithm~\ref{alg3} will be executed. The emergency-role will be designated and be assigned to the selected subject, then the response actions will be performed. On the other hand, if the emergency-role is designated, it means that the condition is true and the system is in emergency state. In this case, if the $P$-$value$ is not 0, the $optimalSel()$ algorithm (line 28) is selected. Otherwise, according to the value of $prob$, maximum probability path (line 23) or minimum execution time path (line 25) is selected respectively.
\end{proof}

\begin{theorem}\label{T3}
\textbf{Security}. The permissions assigned to execute the response actions can only be used in emergency mode. Fault-tolerance can only be processed when emergency management fails.
\end{theorem}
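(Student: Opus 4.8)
The plan is to prove the two claims of the \textbf{Security} theorem separately, in each case tracing the control flow of Algorithm~\ref{alg3} to show that the relevant action is gated by the corresponding state condition, and then appealing to Theorem~\ref{T2} (\textbf{Correctness}) to avoid re-deriving facts already established there. The two obligations are: (i) the permissions associated with response actions are activated only while the system is in emergency mode; and (ii) the fault-tolerance routine runs only when emergency management has failed, i.e.\ when the computed $P$-$value$ is $0$ (returned as \texttt{NULL}).

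For claim (i), I would first observe that in Algorithm~\ref{alg3} the only statements that bind response-action permissions are \texttt{addACL(er,TS)} (line 40), \texttt{alterRole(selSubject,er)} (line 41), and \texttt{informSub(selSubject,TS,er)} (line 42), and that all three lie inside the block guarded at line 15 by the test \texttt{Mode = Emergency}. Hence these permissions can only be added to an object's ACL while the system is in emergency mode. I would then argue the converse direction needed for ``only'': when \texttt{checkSysState()} later returns the normal state $N$ (line 7), \texttt{Mode} is reset to \texttt{Normal} (line 8) and, crucially, line 12 computes \texttt{RescPerm} $\leftarrow (t \otimes \mathit{curState})$, which rescinds exactly the permissions that distinguish the two states; this matches step~(4) of the ``Enabling Response Actions'' procedure in Section~\ref{enableAct}, where the assigned permissions are withdrawn once the emergency is solved or the $Ed$ has expired. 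Thus outside emergency mode no response-action permission remains usable. By Theorem~\ref{T2} the emergency-role itself is designated if and only if the system is in emergency state, so the emergency-role carrying these permissions cannot even be activated otherwise.

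For claim (ii), I would note that \texttt{faultTolerance(egid)} is invoked at line 22 of Algorithm~\ref{alg3}, which is reached only along the branch where \texttt{pv = NULL} (line 21), and that this branch itself sits inside the \texttt{Mode = Emergency} block and inside the \texttt{hasNewEmry(egid)} $\neq$ \texttt{NULL} test. By Definition~\ref{D6} and the recursion~(\ref{equ1}), $Pv$ of the root node is $0$ precisely when every $RL$ path lets some $Ed$ expire, i.e.\ when emergency management cannot eliminate all active emergencies in the group; this is exactly the failure condition described in Section~\ref{ft}. When instead \texttt{pv} is not \texttt{NULL}, control passes to \texttt{optimalSel(egid)} (line 28) and fault-tolerance is never called. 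I would also cross-check Algorithm~\ref{alg2}, whose precondition is an entity $e$ flagged as needing fault tolerance, confirming that it is only entered via this path. Therefore fault-tolerance executes if and only if emergency management has failed.

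The main obstacle I anticipate is the ``only if'' half of claim (i): the algorithm text does not spell out where already-issued permissions are revoked, so the argument must lean on the \texttt{RescPerm} assignment at line 12 together with the prose in Section~\ref{enableAct} (and on the $td$/$Ed$ time-duration attached to every permission in Table~\ref{table1}) to conclude that a permission cannot survive a transition out of emergency mode. I would make that linkage explicit, emphasising that every permission carries a bounded time duration $td$ so that even a permission issued in the last instant of emergency mode lapses, and that state transitions trigger $\otimes$-rescission, leaving no path by which a response-action permission is exercisable in normal mode. The remaining steps are routine line-tracing through Algorithm~\ref{alg3} and appeals to Theorem~\ref{T2}.
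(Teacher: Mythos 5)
Your proposal is correct and follows essentially the same route as the paper's own proof: the paper likewise argues claim (i) from the rescission of permissions (the \texttt{RescPerm} step at line 12) upon the transition out of emergency mode, and claim (ii) from the fact that \texttt{faultTolerance()} is invoked only on the branch where the $P$-$value$ is $0$. Your version is simply more detailed, adding the line-by-line guard tracing and the appeal to Theorem~\ref{T2}, which the paper's two-sentence proof leaves implicit.
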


\begin{proof}\label{P3}
When the system moves from emergency mode to normal mode, the function $rescPerm()$ will be executed. The permissions and the subject's role will also be rescinded. The permissions for executing the response actions can never be used in normal state. When the $P$-$value$ is 0 (line 20), function $faultTolerance()$ is called.
\end{proof}

\begin{theorem}\label{T4}
\textbf{Liveness.} The time duration of the access permissions are limited and the system must have the ability to rescind the permissions when the emergency is eliminated or the Ed is expired.
\end{theorem}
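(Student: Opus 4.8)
The plan is to argue directly from the structure of the FEAC execution model (Algorithm~\ref{alg3}) together with the specification of permissions given in Table~\ref{table1}, mirroring the style of the preceding proofs. The claim has two conjuncts: (i) every access permission issued for emergency handling carries a bounded time duration, and (ii) the system is capable of rescinding such a permission both when the emergency is eliminated and when its emergency-duration $Ed$ expires. I would treat these in turn.

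First I would establish the bounded-duration property by inspecting how permissions are created. By the definition of $P$ in Table~\ref{table1}, every permission is a triple $<Oid,op,td>$ where $td$ is exactly the time duration limiting its use; thus boundedness is built into the data model. I would then trace the emergency path in Algorithm~\ref{alg3}: when an emergency $c$ is processed, the task set is obtained via $getTS(c)$ and the permissions are installed by $addACL(er,TS)$ on line~42. Here I would invoke the ``Enabling Response Actions'' specification in Section~\ref{enableAct}, step~(1), which states that the $Ed$ of the emergency is associated with the permissions; hence the $td$ field of each newly activated permission is set to (a value derived from) the finite $Ed'$ computed through equations (\ref{equ2})--(\ref{equ4}), so no emergency permission is ever granted without an explicit finite expiry.

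Next I would establish the rescission capability. For the ``emergency eliminated'' case I would point to the main loop: on each iteration $checkSysState()$ is called (line~5), and when the state returns to $N$ the branch on lines~7--8 sets $Mode\leftarrow Normal$ and line~12 computes $RescPerm\leftarrow(t\otimes curState)$, which by the same reasoning used in the proof of Theorem~\ref{T3} strips the emergency permissions and restores the subjects' original roles from the ORT. For the ``$Ed$ expired'' case I would appeal to step~(4) of Section~\ref{enableAct}, which mandates rescinding the assigned permissions once $Ed$ has elapsed, and observe that since every such permission carries the finite $td$ bound established above, the polling loop with period $tp$ is guaranteed to revisit and revoke it within one cycle of its expiry; combining the two cases shows FEAC always possesses the required rescission ability.

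The main obstacle I anticipate is that Algorithm~\ref{alg3} does not contain an explicit line that revokes a permission purely on $Ed$ expiry independent of a system-state change — the revocation machinery shown is triggered by the $t\neq curState$ test, whereas $Ed$ expiry of one emergency need not alter the global state. Bridging this gap rigorously requires leaning on the prose specification in Section~\ref{enableAct} step~(4) and on the per-permission $td$ semantics from Table~\ref{table1} rather than on the pseudocode alone, and arguing that the enforcement layer consults $td$ whenever a permission is exercised. I would therefore phrase the liveness argument as: boundedness follows syntactically from the permission model; rescission-on-elimination follows from the loop and the $\otimes$ operator as in Theorem~\ref{T3}; and rescission-on-expiry follows because each permission's finite $td$ makes it unusable past its deadline and the periodic sweep reclaims it, so the stated ability is never absent.
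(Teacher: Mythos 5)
Your proposal is correct and strictly more complete than the paper's own proof, which consists of a single observation: that the state change is detected on lines 8 and 12 of Algorithm~\ref{alg3} and the permissions are rescinded on line 12 when the system returns to normal state. In other words, the paper only argues the ``emergency eliminated'' half of the rescission claim and is silent on both the boundedness conjunct and the ``Ed expired'' case. You decompose the theorem into its three actual obligations and discharge each: boundedness syntactically from the $td$ field of the permission model in Table~\ref{table1} together with step~(1) of Section~\ref{enableAct}; rescission-on-elimination from the $t\neq curState$ branch and the $\otimes$ operator, which is exactly the paper's argument; and rescission-on-expiry from step~(4) of Section~\ref{enableAct} plus the per-permission deadline. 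Your explicit flagging of the gap --- that Algorithm~\ref{alg3} has no code path that revokes a permission on $Ed$ expiry without a global state change, so the pseudocode alone cannot carry that conjunct --- is a genuine weakness of the paper's formalization that its own proof papers over by simply not mentioning expiry. The cost of your route is that it must lean on prose specification (Section~\ref{enableAct}) rather than the algorithm, so it is less self-contained as a ``proof about the pseudocode''; the benefit is that it actually proves the theorem as stated rather than a weaker substatement. No correction is needed; if anything, your version should replace the paper's.
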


\begin{proof}\label{P4}
When the system state goes back to normal state, the access permissions are rescinded. On lines 8 and 12 in Algorithm~\ref{alg3}, the change of system state is detected, then on line 12 the permissions are rescinded to satisfy security requirements.
\end{proof}

\begin{theorem}\label{T5}
\textbf{Non-Repudiation}. Malicious actions performed in emergency situations are restricted and the subject cannot be repudiated.
\end{theorem}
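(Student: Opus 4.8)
The plan is to follow the same style of argument used in the proofs of Theorems~\ref{T1}--\ref{T4}, namely to trace the relevant control-flow of Algorithm~\ref{alg3} and the supporting algorithms, and to appeal to the definitions of the table components (in particular AMU, the Account Management Unit, and ORT, ASRT) from Section~\ref{pi}. Non-Repudiation has two halves, and I would prove each separately. First, \emph{malicious actions are restricted}: this follows from the combination of (i) the minimum-permission principle for emergency-roles stated in Section~\ref{erss}, so a subject holding an emergency-role can only invoke the operations contained in the selected $TS$ via $addACL(er,TS)$ on line~41; (ii) the time-duration $td$ attached to every permission in the definition of $P$ (Table~\ref{table1}) together with the association of $Ed$ to the permissions in step~(1) of Section~\ref{enableAct}, so the window in which any action can be performed is bounded; and (iii) the rescinding mechanism $rescPerm()$ (line~12) and the Liveness property (Theorem~\ref{T4}) guaranteeing the permissions are actually withdrawn. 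Hence the scope of any action a (possibly malicious) subject can take is confined in both capability and time.

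Second, \emph{the subject cannot repudiate its actions}: the key observation is that every assignment of an emergency-role and every execution of a response action is logged. I would point to line~45, $recordAct()$, which is invoked after the loop over $curEmy$ that performs $alterRole$ and $informSub$, and to the description of AMU in Section~\ref{FEAC} (``AMU records all the action events of the system'') and the closing sentence of Section~\ref{pi} (``All the actions are recorded to ensure any malicious activity can be detected''). Because $selSubject$ is bound to exactly one emergency-role $er$ (the one-subject/one-emergency-role constraint of Section~\ref{erss}) and the original role is preserved in ORT for recovery (step~(2) of Section~\ref{enableAct}), every recorded action event is attributable to a unique subject identity $Sid$. I would then conclude that the audit trail produced by AMU provides the non-repudiation evidence: a subject that performed a response action cannot later deny it, since the action, the emergency-role it was performed under, and the subject identity are jointly recorded.

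I would structure the write-up as: (1) cite the lines of Algorithm~\ref{alg3} and the role-mapping/constraint steps showing that a subject in emergency mode is granted only the minimal, time-bounded permissions of a single emergency-role; (2) invoke Theorems~\ref{T3} and~\ref{T4} to note these permissions cannot leak into normal mode and are rescinded on expiry of $Ed$, so the ``restricted'' clause holds; (3) cite $recordAct()$ and the AMU/ASRT/ORT bookkeeping to show each action is logged against a unique $Sid$; (4) conclude that logging plus unique attribution yields non-repudiation.

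The main obstacle I anticipate is that the excerpt never gives a formal adversary model or a precise definition of ``non-repudiation'' (e.g., it does not specify that log entries are signed, tamper-evident, or timestamped in a verifiable way), so the argument will necessarily be at the same informal, ``trace-the-algorithm'' level as the preceding proofs rather than a cryptographic reduction. I would therefore phrase the conclusion as: assuming the AMU log is trustworthy and append-only, the recorded $(Sid, er, \text{action}, \text{time})$ tuples suffice to attribute every emergency action, and the minimum-permission and time-bounded design confines the damage a malicious subject can do; making the logging itself tamper-resistant is outside the scope of the model and would be the natural place a reviewer pushes back.
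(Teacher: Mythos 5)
Your proposal is correct and follows essentially the same route as the paper's own proof, which likewise just points to the $Ed$-limited permissions for the ``restricted'' clause and to $recordAct()$ (line~47 in the paper's numbering, not~45) logging all role assignments, response-action executions, and notifications for the ``cannot be repudiated'' clause. Your version is in fact more thorough than the paper's two-sentence argument --- the appeals to Theorems~\ref{T3} and~\ref{T4}, the one-subject/one-emergency-role constraint, and the honest caveat about the missing adversary model and log integrity all go beyond what the paper states --- but the underlying approach is identical.
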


\begin{proof}\label{P5}
The permissions for executing the response actions are limited by the Ed of emergency. On line 47  in Algorithm~\ref{alg3}, the $recordAct()$ function records all the actions performed in the system, such as the assignment of emergency-role and permissions, the execution of the response actions and the notification of the selected subjects. Once the malicious actions occur, the system will record them into the log files.
\end{proof}

%table 3
\begin{table*}
\caption{Emergencies in hospital medical care}\label{tab3}
\begin{footnotesize}
\begin{tabular}{clclm{220pt}ll}
\toprule
ID& Emergency& prio& Ed&  Task-Set (TS)& Exec.Time& prob\\
\midrule
E1 & Fire& 3& 20 min& $\{<ICU Door, <u>>,<FireExtinguisher,<u>>\}$& 3 min& 0.8\\%\hline
E2 & Dust and Smoke& 4& 10 min& $\{<Ventilation fan, <u>>,<ICU Door,<u>>\}$& 2 min& 0.85\\%\hline
E3 & Cardiac Arrest& 6& 8 min& $\{<P1 Health Data, <r\&w>>,<Defibrillator1,<u>>,<ICU Door,<u>>\}$& 1 min& 0.8\\%\hline
E4 & Headache& 9& 30 min& $\{<P1 Health Data, <r\&w>>,<Medicine Room Door,<u>>,<ICU Door,<u>>\}$& 1 min& 0.9\\%\hline
E5 & Fever& 9& 20 min& $\{<P1 Health Data, <r\&w>>,<Medicine Room Door,<u>>,<ICU Door,<u>>\}$& 2 min& 0.95\\%\hline
E6 & Arrhythmia& 7& 18 min& $\{<P2 Health Data, <r\&w>>,<Electrocardiograph1,<u>>,<ICU Door,<u>>\}$& 2 min& 0.85\\%\hline
E7 & Angina Cordis& 8& 12 min& $\{<P2 Health Data, <r\&w>>,<ICU Door,<u>>\}$& 1 min& 0.9\\
\bottomrule
\end{tabular}
\end{footnotesize}
\end{table*}

%figure 7
\begin{figure*}
 	\centerline{\includegraphics[width=6in]{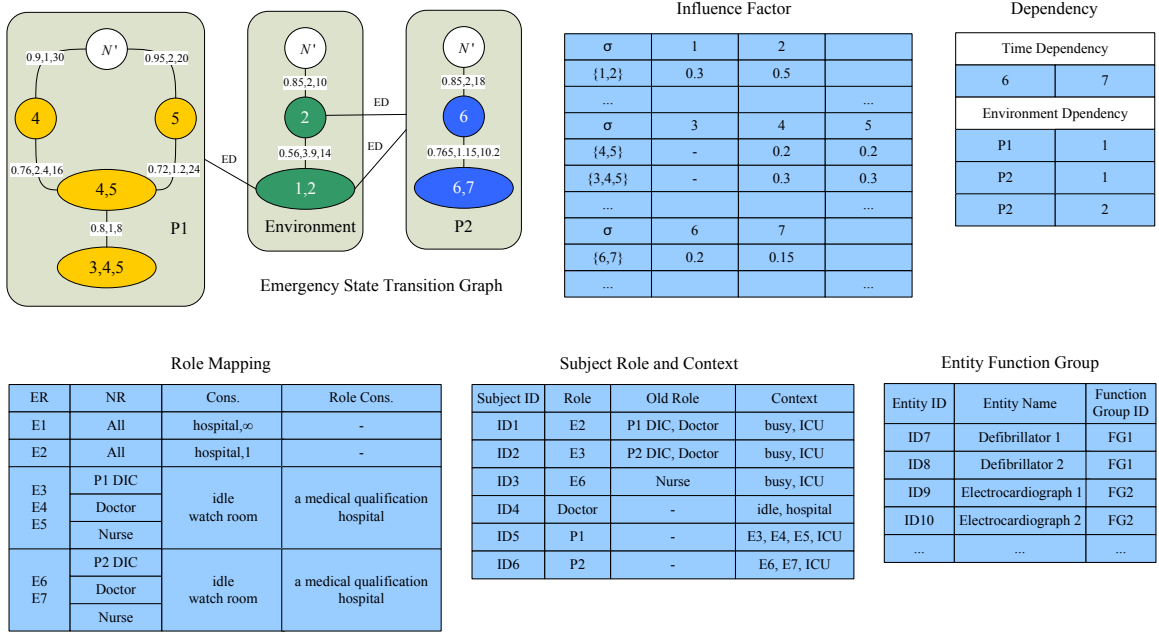}}
  \caption{PD-AGM for hospital medical care}\label{fig7}
\end{figure*}

%section 6
\section{Case Study}\label{cs}%6

In this section, we present an example to illustrate how FEAC can be used for CPS applications. The case of hospital medical care shows the ability of FEAC to handle multiple emergencies in emergency situations. The emergencies which occurred in hospital medical care application are shown in Table~\ref{tab3}. $E1$ and $E2$ are environment emergencies. $E3$ to $E5$ are emergencies that occurred on patient 1. $E6$ and $E7$ are emergencies that occurred on patient 2. Figure~\ref{fig7} shows the PD-AGM for the hospital medical care application. The generation of the emergency state transition graph uses the dependency relationships and influence factor in Figure~\ref{fig7}. To simplify calculation, the values of $\alpha$ and $\beta$ are set to 1. The dependency between entity $P1$ and emergency $E1$ is environment dependency. Environment dependency also exists between entity $P2$ and $E1$, as well as entity $P2$ and $E2$. The emergencies in the emergency-groups of $P1$ and $P2$ must wait the accomplishment of the corresponding environment emergencies. Notice that, the two path of state \{4, 5\} have the same $P$-$value$ of 0.684, and the corresponding execution time are 3.4min and 3.2min respectively. Then the right RL path is selected.

The principal components of the FEAC for the hospital medical care application are shown in Figure~\ref{fig7}. All the normal-roles map to the emergency-roles $E1$ and $E2$. Constraints for selecting subjects include the position limitation and the number of execution. The role-constraints are used when no subject has been selected. The original roles of the subject are recorded, and reassigned when the response actions have been executed. The entries in ACL of the objects influence the access control decisions for the access request of specific subject.

%section 7
\section{Conclusion}\label{con}%7

In this paper, a fault-tolerant emergency-aware access control scheme called FEAC has been presented, which provides proactive and adaptive access control polices to address the multiple emergencies management problem and fault-tolerance problem for CPS applications. The PD-AGM model is introduced to select the optimal response action path for eliminating all the active emergencies. The priority and dependency relationships of emergencies are used to exclude the infeasible response action paths and relieve the emergency combination state explosion problem. In order to handle all the emergencies timely, emergency-group and emergency-role are introduced for processing multiple emergencies in parallel. FEAC can meet responsiveness, correctness, security, liveness and non-repudiation requirements. A case study of hospital medical care application has illustrated the effectiveness of FEAC.

 \section*{Acknowledgments}
 This work was partially supported by the National Natural Science Foundation of China under Grant No.60703101 and No. 60903153 and the Fundamental Research Funds for the Central Universities.

%**************** End of text entry *****************

\end{document}